\theoremstyle{plain}
\newtheorem{theorem}{Theorem}
\newtheorem{proposition}[theorem]{Proposition}
\theoremstyle{definition}
\theoremstyle{remark}
\title{Wave-Based Semantic Memory with Resonance-Based Retrieval: A Phase-Aware Alternative to Vector Embedding Stores}
\author{
  Aleksandr Listopad \\
  Independent Researcher \\
  \texttt{al@evacortex.ai}
}
\begin{document}
\maketitle

\begin{abstract}
Conventional vector-based memory systems rely on cosine or inner product similarity within real-valued embedding spaces. While computationally efficient, such approaches are inherently phase-insensitive and limited in their ability to represent contextual modulation, polarity, or structured semantic transformations.

We introduce a wave-based memory representation, in which embedding vectors are transformed into fixed-length complex-valued waveforms of the form $\psi(x) = A(x)\,e^{i\phi(x)}$. Here, $x$ indexes vector dimensions, $A(x)$ encodes semantic amplitude, and $\phi(x)$ encodes contextual phase. This formulation supports a phase-aware similarity function---the \emph{resonance score}---which reflects alignment in both amplitude and phase and intuitively quantifies constructive interference between semantic patterns.

Formally, the resonance score is computed as the squared magnitude of the elementwise complex sum of two patterns, aggregated across dimensions and normalized by their combined energy. The scale-alignment factor $R$ penalizes energy imbalance and prevents dominance by high-intensity inputs; the full expression keeps $S$ within $[0,1]$. This metric generalizes cosine similarity by incorporating phase coherence, enabling comparison between meaning-modulated representations that would otherwise appear similar in traditional spaces.

We implement this model in \emph{ResonanceDB}, a source-available system that stores amplitude--phase patterns in memory-mapped binary segments and evaluates similarity using a deterministic comparison kernel. Compatibility with standard vector embeddings is preserved via a \emph{sign--phase mapping} ($A(x)=|v(x)|$, $\phi(x)=0$ if $v(x)\ge 0$, $\phi(x)=\pi$ otherwise), which maintains a valid polar form while retaining sign information. For vectors with non-negative entries, a simple zero-phase initialization ($\phi(x)=0$) is also valid.

Empirical evaluation on synthetic and embedding-derived datasets shows that phase-enriched queries improve top-$k$ retrieval in tasks involving negation, inversion, and contextual shift---distinctions often blurred under cosine-based retrieval. These results suggest that wave-based memory provides a cognitively inspired, phase-sensitive alternative to conventional vector stores, expanding the expressive capacity of semantic retrieval in reasoning-oriented applications.
\end{abstract}

\keywords{semantic memory \and complex embeddings \and resonance-based retrieval \and embedding storage \and cognitive memory \and reasoning systems}

\section{Introduction}
Vector embeddings are the dominant paradigm for encoding semantic information in modern AI systems. From retrieval-augmented generation (RAG) pipelines to knowledge graph embedding and neural search engines, high-dimensional vectors derived from pre-trained language models are widely used to approximate meaning through geometric proximity, typically relying on cosine distance or inner product~\cite{johnson2019faiss, jegou2011product}. These representations are computationally efficient, differentiable, and compatible with approximate nearest neighbor (ANN) methods.

Despite their widespread adoption, vector-based embeddings exhibit fundamental limitations in tasks involving semantic polarity, contextual shifts, or compositional reasoning. Standard embedding spaces represent concepts as static points, but lack mechanisms to encode transformations such as negation, modality, or epistemic stance. For example, embeddings for ``happy'' and ``not happy'' are often closely aligned, despite having opposite meaning~\cite{mikolov2013linguistic}. The geometry captures surface similarity, but fails to reflect operator-level modulation.

To address this, we introduce a wave-based memory representation in which semantic patterns are modeled as complex-valued waveforms of the form $\psi(x) = A(x)\,e^{i\phi(x)}$, where $A(x)$ encodes semantic amplitude and $\phi(x)$ encodes contextual phase. Here, $x$ denotes the index of a vector dimension. This formulation draws on constructive interference, treating meaning not as a static vector but as a modulated wave pattern whose alignment with other patterns depends on both magnitude and phase. Operations such as negation, modality, or discourse shift correspond to phase transformations, while similarity is preserved through phase-sensitive interference energy.

We implement this model in a practical system, ResonanceDB, which stores and retrieves wave patterns as amplitude--phase pairs using memory-mapped binary segments. Similarity in ResonanceDB is evaluated via a deterministic comparison kernel that computes constructive interference energy between two complex patterns. This energy is maximal when patterns are in-phase and proportionally aligned, and minimal when anti-phased.

Importantly, ResonanceDB is compatible with standard vector embeddings: any real-valued vector can be mapped to a wave pattern via amplitude assignment and sign--phase initialization, enabling seamless integration into existing pipelines without retraining or re-indexing.

The goal of this work is not to replace vector embeddings, but to extend their expressiveness by introducing a complementary, phase-aware semantic substrate. Our contribution lies in formalizing a resonance-based similarity metric, implementing an efficient storage and retrieval system, and demonstrating empirically that this framework captures distinctions inaccessible to conventional vector methods.

\section{WavePattern Architecture}

\begin{figure}[htbp]
\centering
\includegraphics[width=0.80\linewidth]{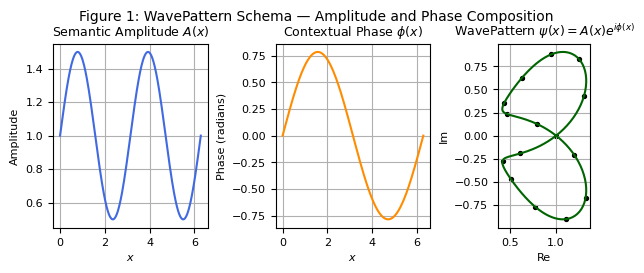}
\caption{WavePattern schema. Top: amplitude $A(x)$. Middle: contextual phase $\phi(x)$. Bottom: complex trajectory $\psi(x) = A(x)\,e^{i\phi(x)}$ in the complex plane.}
\label{fig:wavepattern}
\end{figure}

Each semantic pattern is modeled as a discrete complex-valued waveform
\begin{equation}
\psi(x) = A(x)\,e^{i\phi(x)},
\end{equation}
where $x$ indexes vector dimensions, $A(x)$ is the real-valued amplitude encoding semantic intensity ($A(x)\ge 0$), and $\phi(x)$ is the phase component representing contextual modulation ($\phi(x)\in[-\pi,\pi)$). The resulting $\psi(x)$ is a pointwise complex representation of meaning, where amplitude governs salience and phase expresses structural or operator-level semantics.

Similarity between two waveforms $\psi_1(x)$ and $\psi_2(x)$ is computed using a resonance-based metric derived from constructive interference:
\begin{equation}
\label{eq:resonance}
S(\psi_1,\psi_2) \;=\; \tfrac{1}{2}\cdot
\frac{\sum_x \left|\psi_1(x)+\psi_2(x)\right|^2}{\sum_x \left(|\psi_1(x)|^2+|\psi_2(x)|^2\right)} \cdot R,
\end{equation}
where the scale-alignment factor $R$ is
\begin{equation}
\label{eq:R}
R \;=\; \frac{2\sqrt{E_1E_2}}{E_1+E_2}, \qquad
E_1=\sum_x|\psi_1(x)|^2,\;\; E_2=\sum_x|\psi_2(x)|^2.
\end{equation}
If $E_1+E_2=0$, we define $S=0$.

\begin{figure}[htbp]
\centering
\includegraphics[width=0.5\linewidth]{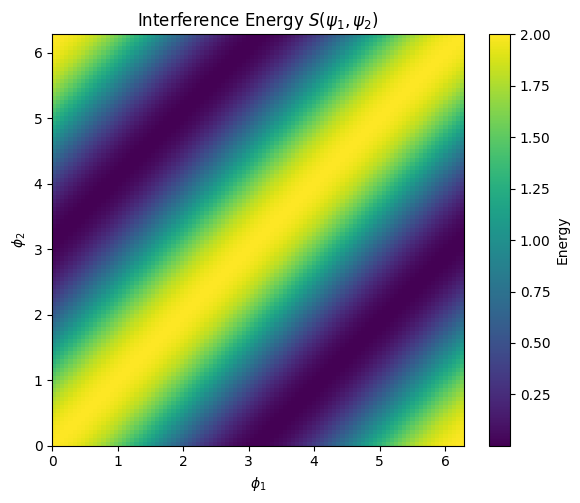}
\caption{Resonance score $S(\psi_1, \psi_2)$ as a function of relative phase $\delta=\phi_2-\phi_1$. Maximal resonance occurs at in-phase alignment ($\delta=0$), minimal at anti-phase ($\delta=\pi$).}
\label{fig:interference}
\end{figure}

This scaling emphasizes interference between patterns of comparable energy and prevents dominance by high-intensity inputs. With purely real patterns ($\Im\psi=0$) and equal norms ($E_1=E_2$), Equation~(\ref{eq:resonance}) reduces to $\tfrac{1+\cos\theta}{2}$, where $\theta$ is the angle between real vectors. For general vectors with arbitrary phases, $S$ remains bounded in $[0,1]$ and symmetric.

Patterns are stored as real-valued pairs $(A,\phi)$ and converted to complex form on demand. The system provides two backends: a scalar Java backend (\texttt{JavaKernel}, CPU baseline) and an \emph{experimental} SIMD backend (\texttt{SimdKernel}) using the Panama Vector API (Java~22).

\section{Comparison with Vector Stores}
Conventional vector similarity uses cosine similarity:
\begin{equation}
\text{cosine}(u,v)=\frac{u\cdot v}{\|u\|\,\|v\|}.
\end{equation}

\begin{figure}[htbp]
\centering
\includegraphics[width=0.8\linewidth]{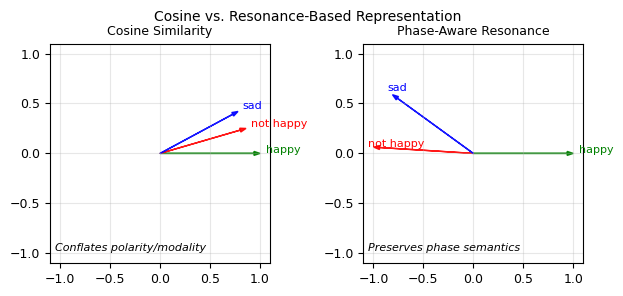}
\caption{Illustration: Cosine similarity (left) can conflate polarity (e.g., ``happy'', ``not happy'', ``sad'' may appear close). Resonance-based space (right) distinguishes them via phase.}
\label{fig:cosine_vs_resonance}
\end{figure}

While efficient, cosine is phase-insensitive and may conflate contrasts (e.g., negation or epistemic shifts) that are structurally distinct.

Resonance operates on complex-valued waveforms and reflects both amplitude alignment and phase coherence. For $\psi_1,\psi_2$ derived from $(A_1,\phi_1)$ and $(A_2,\phi_2)$, $S(\psi_1,\psi_2)$ satisfies:
\begin{align*}
& S \in [0,1] && \text{(bounded)} \\
& S(\psi_1,\psi_2)=S(\psi_2,\psi_1) && \text{(symmetric)} \\
& S(\psi,\psi)=1 && \text{(self-match)} \\
& S(\psi_1,\psi_2)=1 \ \text{iff}\ \psi_1=\psi_2\neq 0 && \text{(self-match only)} \\
& S(\psi_1,\psi_2)=0 \ \text{when}\ \psi_2(x)=-\,\psi_1(x) && \text{(anti-phase)} \\
& S(e^{i\delta}\psi_1,e^{i\delta}\psi_2)=S(\psi_1,\psi_2) && \text{(global phase invariance)}.
\end{align*}
Mapping any real vector $v\!\in\!\mathbb{R}^L$ to $A(x)\!=\!|v(x)|,\;\phi(x)\!=\!0$ if $v(x)\ge 0$ else $\pi$ yields cosine-like behavior: for equal norms $S=\tfrac{1+\cos\theta}{2}$; in general, for fixed norms $S$ is monotonic in $\cos\theta$ only in the purely real case.

\section{Experimental Evaluation}
\label{sec:experiments}
We evaluate (i) latency and scalability, (ii) sensitivity to amplitude/phase variation, and (iii) retrieval precision under structured semantic perturbations.

\subsection{Setup}
The experiments ran on a single machine with \textbf{Windows 11 (10.0)}, \textbf{Java HotSpot 22.0.1} (64-bit), a \textbf{12-core Intel CPU} (3.60\,GHz), and \textbf{32\,GB RAM}, no GPU. The patterns had fixed length $L\in\{512,1024\}$; dataset sizes ranged from $10^4$ to $5\times 10^5$. Embedding-derived inputs were mapped to amplitudes with sign--phase initialization; synthetic patterns varied phase to isolate phase effects. Top-$k$ queries ($k\in\{1,10,100\}$) were executed under \emph{cold-cache} conditions using a fixed \textbf{12-thread} Java executor with concurrent scans over memory-mapped segments.

\textbf{All latency results below use the scalar backend \texttt{JavaKernel} (no SIMD). These figures represent a conservative baseline; the SIMD backend (\texttt{SimdKernel}, Panama Vector API) is experimental and excluded.}

\subsection{Operator Retrieval Metrics}
On the compact operator set (bases \emph{happy} and \emph{good}), resonance achieves \textbf{P@1 = 1.0} for NEG, SHIFT+, INT\_UP, and INT\_DOWN, whereas cosine yields \textbf{P@1 $\approx$ 0.0} across these operators in this setup. These results are consistent with the distributional and heatmap analyses below.

\subsection{Qualitative Analysis of Semantic Operators}
\label{sec:qualitative}
To illustrate the difference between cosine and resonance similarity, we analyzed four representative operators applied to base terms such as \emph{happy} (and also \emph{good} in our compact set): logical negation (NEG), controlled phase shift (SHIFT+), and intensity modulation (INT\_UP, INT\_DOWN).

\subsubsection{Distance Distributions}
For visualization, we report distances on $[0,1]$ defined as $d_\text{cos}=\frac{1-\mathrm{cosine}}{2}$ and $d_\text{res}=1-S$.
Figure~\ref{fig:operator_dists} shows the distance histograms under both metrics. Cosine produces nearly identical distributions for all operators, concentrated around $\approx 0.25$ (mean distance with std $\approx 0.08$). This collapse reflects the inability of cosine to separate negation, shifts, or intensity changes from the base term. Resonance, in contrast, yields operator-specific bands with distinct means (NEG $\approx 0.52$, SHIFT+ $\approx 0.40$, INT\_UP $\approx 0.45$, INT\_DOWN $\approx 0.42$) and higher variance ($\sigma \approx 0.10$--$0.15$), reflecting structured semantic differences. P@1 corroborates this separation.

\begin{figure}[htbp]
  \centering
  \includegraphics[width=0.8\linewidth]{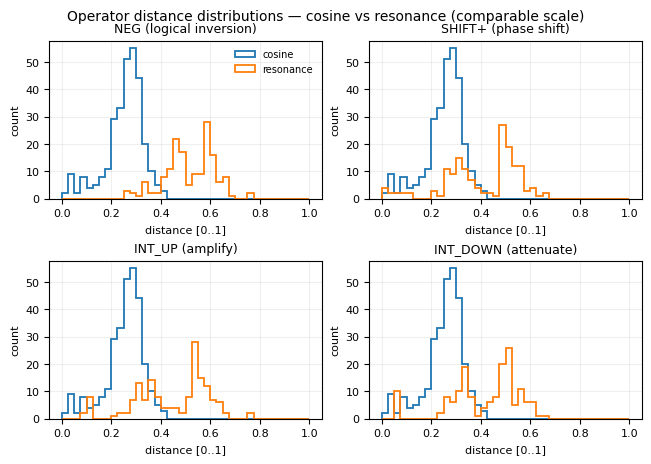}
  \caption{Operator histograms. Cosine distances collapse near $\approx 0.25$ for all operators, while resonance separates them into distinct, semantically consistent bands.}
  \label{fig:operator_dists}
\end{figure}

\subsubsection{Heatmap Structure}
Heatmaps (Figure~\ref{fig:heatmaps}) further highlight the contrast. Cosine distances form dense, unstructured clusters, where negated or intensified forms remain close to their bases. Resonance produces block-diagonal manifolds: NEG aligns with anti-phased opposites, SHIFT+ forms stable offset clusters, and INT\_UP/INT\_DOWN separate into amplitude-driven zones. This structured geometry demonstrates that resonance retrieval preserves operator semantics rather than collapsing them into undifferentiated proximity.

\begin{figure}[htbp]
  \centering
  \includegraphics[width=0.82\linewidth]{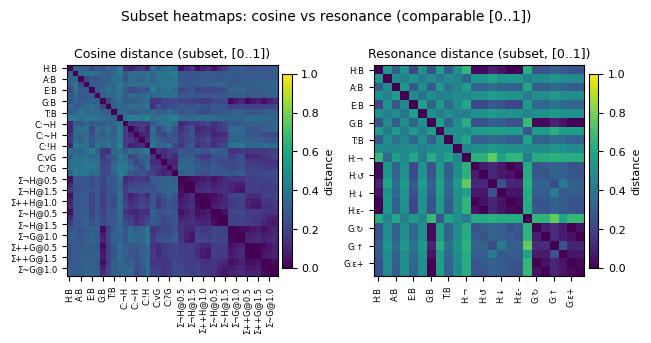}
  \caption{Side-by-side heatmaps. Cosine conflates operator variants near their bases, whereas resonance yields block structure aligned with semantic operators.}
  \label{fig:heatmaps}
\end{figure}

\subsection{Latency and Scalability}
On $L=1024$ and $500$K patterns (heap $\approx$ 512\,MB), we observe interactive CPU-only latencies (cold cache, 12 threads) with \texttt{JavaKernel}:
\begin{itemize}
  \item \textbf{top-1:} avg $\approx$ 135\,ms,\; p95 $\approx$ 152\,ms
  \item \textbf{top-10:} avg $\approx$ 138\,ms,\; p95 $\approx$ 156\,ms
  \item \textbf{top-100:} avg $\approx$ 145\,ms,\; p95 $\approx$ 163\,ms
\end{itemize}

\begin{figure}[htbp]
\centering
\includegraphics[width=0.7\linewidth]{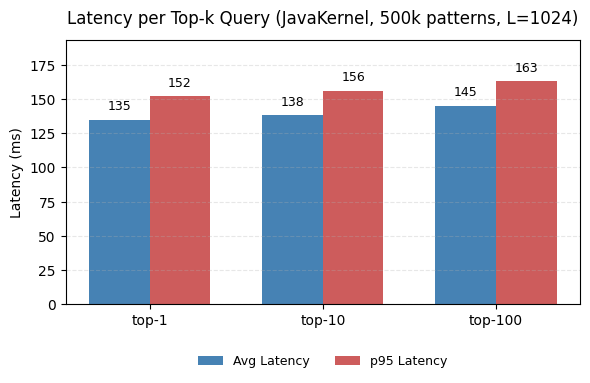}
\caption{Latency per top-$k$ query using JavaKernel ($L=1024$, 500K patterns, 12 threads, cold cache). Average and 95th percentile (p95) latencies scale linearly with dataset size and grow modestly with $k$.}
\label{fig:latency}
\end{figure}

These measurements demonstrate that full waveform comparison---including per-element phase conversion and interference accumulation---is compatible with \emph{interactive response times} on commodity CPUs, without indexing or approximation. Increased I/O throughput and SIMD acceleration are expected to reduce latency further.

\section{Related Work}
\subsection{Vector Similarity Search}
The prevailing approach relies on vector embeddings and geometric similarity metrics. Libraries such as FAISS~\cite{johnson2019faiss} implement efficient ANN search using quantization and indexing, typically grounded in cosine or inner product metrics~\cite{jegou2011product}. While scalable, they treat embeddings as static points, lacking explicit operators for polarity, modality, or compositional recontextualization.

\subsection{Holographic and Complex Embeddings}
Holographic models~\cite{nickel2016holographic} represent entities/relations via circular correlation inspired by associative memory. Complex-valued embeddings~\cite{trabelsi2018deep} introduce phase as a latent component; related phase-parameterized approaches include \emph{ComplEx}~\cite{trouillon2016complex} and \emph{RotatE}~\cite{sun2019rotate}, which model relations as rotations in the complex plane. Our approach elevates phase to a primary operator of semantic alignment, enabling interference-based retrieval sensitive to phase coherence.

\subsection{Memory-Augmented Neural Architectures}
Neural Turing Machines and Differentiable Neural Computers~\cite{graves2016hybrid} add external, differentiable memory accessed by soft attention. ResonanceDB is non-trainable: fixed patterns, deterministic comparison, no runtime parameter updates---closer to content-addressable memory than to a learnable controller.

\subsection{Cognitive and Neuromorphic Models of Meaning}
Interference-based encodings have roots in cognitive and neuromorphic models. Frameworks such as Plate’s HRR~\cite{plate2003holographic} and Kanerva’s Hyperdimensional Computing~\cite{kanerva2009hyperdimensional} embed symbolic structure in high-dimensional vectors via algebraic or convolutional operations. Many remain theoretical or tied to specialized hardware; our system delivers a pragmatic implementation on conventional infrastructure.

\section{Limitations}
ResonanceDB deliberately forgoes training: phases are supplied by the encoder or specified procedurally, and there is no learnable phase modulator. This design improves interpretability but may limit adaptivity. Because similarity depends on phase coherence, the method can be sensitive to phase noise or high-frequency perturbations; smoothing or regularization may be required. The present retrieval is exact and scan-based, with complexity $O(NL)$ over $N$ patterns of length $L$; at larger scales, memory-mapped I/O and per-query phase conversion can dominate latency. Although the scale-alignment factor $R$ mitigates imbalance, amplitude calibration across heterogeneous corpora remains an engineering consideration. The SIMD backend (\texttt{SimdKernel}, Panama Vector API) is experimental; reported latencies use the scalar Java backend.

\section{Discussion and Conclusion}
We proposed a wave-based memory architecture with a resonance-based similarity that is sensitive to both magnitude and phase. Phase shifts naturally encode contrastive phenomena (negation, modality, role inversion) that are difficult to express in traditional vector geometries. Empirically, phase-enriched queries improve top-$k$ retrieval on tasks where cosine fails to discriminate, while maintaining \emph{interactive response times} on CPUs.

From an engineering perspective, ResonanceDB attains practical latency through fixed-length patterns, memory-mapped segments, and efficient accumulation; existing embeddings are compatible via sign--phase mapping, enabling plug-in use within RAG and search pipelines. All reported measurements are CPU-only with the scalar backend; SIMD acceleration and improved I/O throughput are expected to further reduce latency. Thus, wave-based memory is promising not only conceptually but also as an engineering substrate for reasoning systems, where phase can encode roles, hypotheses, or epistemic status.

A source-available prototype of ResonanceDB is available for academic use under the \textbf{Prosperity Public License 3.0}.\footnote{\url{https://prosperitylicense.com/}} This license permits free use until revenue is generated, after which a commercial license is required.

\appendix
\section*{Appendix A: Formal Properties of the Resonance Score}

\paragraph{Notation.}
Let $\psi_k : \{1,\dots,L\}\to\mathbb{C}$ with $\psi_k(x)=A_k(x)e^{i\phi_k(x)}$, $A_k(x)\ge 0$.
We define the Hermitian inner product on $\mathbb{C}^L$ by
\[
\langle \psi_1,\psi_2\rangle \;=\; \sum_{x=1}^L \psi_1(x)\,\overline{\psi_2(x)},
\quad
E_k \;=\; \|\psi_k\|_2^2 \;=\; \sum_x |\psi_k(x)|^2 \;\ge 0.
\]
This inner product is linear in the first argument and conjugate-linear in the second.

Let
\[
S(\psi_1,\psi_2)
\;=\;
\frac{1}{2}\cdot
\frac{\sum_x|\psi_1(x)+\psi_2(x)|^2}{\sum_x\big(|\psi_1(x)|^2+|\psi_2(x)|^2\big)}
\cdot
\frac{2\sqrt{E_1E_2}}{E_1+E_2},
\]
with the convention $S=0$ if $E_1+E_2=0$.

\paragraph{Equivalent algebraic form.}
For complex numbers $z_1,z_2$ one has $|z_1+z_2|^2 = |z_1|^2 + |z_2|^2 + 2\,\Re\!\big(z_1\overline{z_2}\big)$.
Summing pointwise over $x$ yields
\[
\sum_x|\psi_1+\psi_2|^2 \;=\; E_1 + E_2 + 2\,\Re\langle\psi_1,\psi_2\rangle.
\]
Hence, whenever $E_1+E_2>0$,
\begin{equation}
\label{eq:S-compact}
S(\psi_1,\psi_2)
\;=\;
\frac{\big(E_1+E_2+2\,\Re\langle\psi_1,\psi_2\rangle\big)\,\sqrt{E_1E_2}}{(E_1+E_2)^2}.
\end{equation}

\begin{theorem}[Bounds]
For all $\psi_1,\psi_2$, $0\le S(\psi_1,\psi_2)\le 1$.
\end{theorem}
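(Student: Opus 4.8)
The plan is to work from the equivalent algebraic form \eqref{eq:S-compact}, adopting the stated convention $S=0$ when $E_1+E_2=0$, so that throughout I may assume $E_1+E_2>0$. The cleanest route is to recognize $S$ as a product of two nonnegative factors, each of which I bound above by $1$. Writing $S=F_1\cdot R$ with $F_1=\tfrac{1}{2}\cdot\frac{\sum_x|\psi_1+\psi_2|^2}{E_1+E_2}$ and $R=\frac{2\sqrt{E_1E_2}}{E_1+E_2}$, it suffices to show $0\le F_1\le 1$ and $0\le R\le 1$, since the product of two numbers in $[0,1]$ again lies in $[0,1]$.

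For the lower bound both factors are manifestly nonnegative: $\sum_x|\psi_1+\psi_2|^2$ is a sum of squared magnitudes, $\sqrt{E_1E_2}\ge 0$, and the denominators are strictly positive. Hence $S\ge 0$ immediately, which disposes of the left inequality without further work.

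For the upper bound I would treat the two factors separately. The bound $R\le 1$ is precisely the AM--GM inequality $\sqrt{E_1E_2}\le\tfrac12(E_1+E_2)$ applied to the nonnegative energies. For $F_1$, I expand the numerator via the identity already established, $\sum_x|\psi_1+\psi_2|^2=E_1+E_2+2\,\Re\langle\psi_1,\psi_2\rangle$, and control the cross term by Cauchy--Schwarz, $\Re\langle\psi_1,\psi_2\rangle\le|\langle\psi_1,\psi_2\rangle|\le\|\psi_1\|\,\|\psi_2\|=\sqrt{E_1E_2}$, followed once more by AM--GM to obtain $2\,\Re\langle\psi_1,\psi_2\rangle\le 2\sqrt{E_1E_2}\le E_1+E_2$. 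This gives numerator $\le 2(E_1+E_2)$, hence $F_1\le 1$, and multiplying the two bounds yields $S=F_1R\le 1$.

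The computation is entirely routine, so there is no genuine obstacle; the only point requiring care is the bookkeeping of which inequality is doing what, and in particular checking that the two-factor chain is not lossy. As a sanity check I would verify the one-pass version: bounding \eqref{eq:S-compact} directly reduces the claim $S\le 1$ to $(a+b)^2ab\le(a^2+b^2)^2$ with $a=\sqrt{E_1}$, $b=\sqrt{E_2}$, which factors as $(a-b)^2(a^2+ab+b^2)\ge 0$ and is nonnegative for $a,b\ge 0$. I would present the product-of-factors argument as the main proof, since it makes the structure of $S$ as an interference factor times the scale-alignment factor $R$ transparent.
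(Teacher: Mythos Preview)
Your proof is correct, and it takes a genuinely different decomposition from the paper's. You split $S=F_1\cdot R$ into the interference factor and the scale-alignment factor, bound each in $[0,1]$ separately (AM--GM twice, Cauchy--Schwarz once), and multiply. The paper instead works directly from \eqref{eq:S-compact} as a single expression: it uses Cauchy--Schwarz to confine $\Re\langle\psi_1,\psi_2\rangle$ to $[-\sqrt{E_1E_2},\sqrt{E_1E_2}]$, notes that $S$ is affine in this quantity with nonnegative slope, and evaluates at the two endpoints. The lower endpoint yields $S_{\min}=\frac{(\sqrt{E_1}-\sqrt{E_2})^2\sqrt{E_1E_2}}{(E_1+E_2)^2}\ge 0$; the upper endpoint reduces to the algebraic inequality $ab(a+b)^2\le(a^2+b^2)^2$, which is exactly the factorization you quote as your sanity check. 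Your lower bound is cleaner than the paper's---nonnegativity of $\sum_x|\psi_1+\psi_2|^2$ is immediate, with no need to invoke Cauchy--Schwarz at all---while the paper's one-pass argument has the minor advantage of producing the sharp extremal values of $S$ as explicit functions of $E_1,E_2$, rather than bounding two factors independently. Both routes are elementary; yours makes the product structure of $S$ more transparent.
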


\begin{proof}
By Cauchy–Schwarz for Hermitian inner products,
$|\langle\psi_1,\psi_2\rangle|\le \sqrt{E_1E_2}$, hence
$-\sqrt{E_1E_2} \le \Re\langle\psi_1,\psi_2\rangle \le \sqrt{E_1E_2}$.
Using \eqref{eq:S-compact} we get
\[
S \;=\; \frac{\sqrt{E_1E_2}}{(E_1+E_2)^2}\,\Big(E_1+E_2+2\,\Re\langle\psi_1,\psi_2\rangle\Big).
\]
The minimum is at $\Re\langle\psi_1,\psi_2\rangle=-\sqrt{E_1E_2}$:
\[
S_{\min}
= \frac{\sqrt{E_1E_2}}{(E_1+E_2)^2}\,\big(E_1+E_2-2\sqrt{E_1E_2}\big)
= \frac{(\sqrt{E_1}-\sqrt{E_2})^2\,\sqrt{E_1E_2}}{(E_1+E_2)^2}\;\ge 0.
\]
The maximum is at $\Re\langle\psi_1,\psi_2\rangle=+\sqrt{E_1E_2}$:
\[
S_{\max}
= \frac{(E_1+E_2+2\sqrt{E_1E_2})\,\sqrt{E_1E_2}}{(E_1+E_2)^2}.
\]
Let $a=\sqrt{E_1}$, $b=\sqrt{E_2}$. Then
\[
S_{\max} = \frac{ab(a+b)^2}{(a^2+b^2)^2} \;\le\; 1,
\]
since $(a^2+b^2)^2 - ab(a+b)^2 = (a-b)^2(a^2+ab+b^2)\ge 0$.
Thus $0\le S\le 1$.
\end{proof}

\begin{proposition}[Symmetry and self-match]
$S(\psi_1,\psi_2)=S(\psi_2,\psi_1)$ and $S(\psi,\psi)=1$ whenever $\psi\neq 0$.
\end{proposition}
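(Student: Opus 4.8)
The plan is to lean entirely on the compact algebraic form \eqref{eq:S-compact} already established, since it isolates the only potentially asymmetric ingredient into a single real inner-product term. For the symmetry claim, I would first observe that the factors $E_1+E_2$ and $\sqrt{E_1E_2}$ are visibly unchanged under swapping $\psi_1$ and $\psi_2$. It then remains only to handle $\Re\langle\psi_1,\psi_2\rangle$: here I would invoke the conjugate symmetry of the Hermitian inner product, $\langle\psi_2,\psi_1\rangle=\overline{\langle\psi_1,\psi_2\rangle}$, together with the fact that conjugation preserves the real part, so this term is swap-invariant as well. Concluding, every factor in \eqref{eq:S-compact} is symmetric, hence so is $S$. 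I would also note that symmetry can be read directly off the original definition \eqref{eq:resonance}, since $\psi_1+\psi_2=\psi_2+\psi_1$ and both the energy denominator and $R$ are symmetric in the two indices; the compact form merely makes this transparent.

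For the self-match identity, I would set $\psi_1=\psi_2=\psi$ and abbreviate $E:=\|\psi\|_2^2$. The hypothesis $\psi\neq 0$ is used precisely here to guarantee $E>0$, placing us in the nondegenerate branch $E_1+E_2=2E>0$ rather than the convention $S=0$. With $E_1=E_2=E$ and $\Re\langle\psi,\psi\rangle=\langle\psi,\psi\rangle=E$, substituting into \eqref{eq:S-compact} makes the expression collapse to $4E^2/(2E)^2=1$.

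I do not expect any genuine obstacle: both parts reduce to inspection of, or a one-line substitution into, the already-derived closed form. The single point that warrants care is the degenerate case — the hypothesis $\psi\neq 0$ is essential to ensure $E>0$ and thereby avoid the $S=0$ convention, since without it the self-match identity would be meaningless (indeed false by convention) at the zero pattern.
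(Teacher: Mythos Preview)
Your proposal is correct and follows essentially the same approach as the paper's own proof: both read symmetry directly off \eqref{eq:S-compact} (you spell out the conjugate-symmetry step for $\Re\langle\psi_1,\psi_2\rangle$, which the paper leaves implicit) and both compute the self-match by substituting $\psi_1=\psi_2$ into \eqref{eq:S-compact}. Your explicit remark that $\psi\neq 0$ is needed to avoid the $S=0$ convention is a useful addition but not a departure in method.
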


\begin{proof}
Symmetry is immediate from \eqref{eq:S-compact}.
For self-match, set $\psi_2=\psi_1$: then $E_1=E_2$ and $\Re\langle\psi_1,\psi_1\rangle=E_1$,
so
\(
S(\psi,\psi) = \frac{(2E_1+2E_1)E_1}{(2E_1)^2} = 1.
\)
\end{proof}

\begin{proposition}[Global phase invariance]
For any $\delta\in\mathbb{R}$, $S(e^{i\delta}\psi_1, e^{i\delta}\psi_2)=S(\psi_1,\psi_2)$.
\end{proposition}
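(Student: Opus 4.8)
The plan is to reduce everything to the compact form \eqref{eq:S-compact}, which expresses $S$ as a function of only three scalar quantities: the energies $E_1$, $E_2$ and the real part of the Hermitian inner product $\Re\langle\psi_1,\psi_2\rangle$. Once I establish that each of these is invariant under the simultaneous rotation $\psi_k \mapsto e^{i\delta}\psi_k$, invariance of $S$ follows immediately, since $S$ is assembled from them alone.

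First I would check the energies. Because $|e^{i\delta}|=1$, we have $\|e^{i\delta}\psi_k\|_2^2 = |e^{i\delta}|^2\,\|\psi_k\|_2^2 = E_k$, so both $E_1$ and $E_2$ are unchanged; in particular the degenerate branch $E_1+E_2=0$ (where $S=0$ by convention) is preserved. Next I would treat the inner product, which is the crux of the argument: exploiting the \emph{Hermitian} structure, the conjugation applied to the second slot cancels the rotation applied to the first,
\[
\langle e^{i\delta}\psi_1, e^{i\delta}\psi_2\rangle
= \sum_x e^{i\delta}\psi_1(x)\,\overline{e^{i\delta}\psi_2(x)}
= \sum_x e^{i\delta}e^{-i\delta}\,\psi_1(x)\,\overline{\psi_2(x)}
= \langle\psi_1,\psi_2\rangle,
\]
so $\Re\langle\psi_1,\psi_2\rangle$ is likewise invariant. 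Substituting the three invariant quantities back into \eqref{eq:S-compact} then yields $S(e^{i\delta}\psi_1, e^{i\delta}\psi_2) = S(\psi_1,\psi_2)$.

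There is no genuine obstacle here; the only point worth flagging is conceptual rather than technical. A \emph{common} global phase is invisible precisely because the Hermitian form is conjugate-linear in its second argument, so the two rotation factors annihilate one another. This is exactly why the resonance score cannot detect an overall phase offset, while remaining sensitive to the \emph{relative} phase $\delta=\phi_2-\phi_1$ illustrated in Figure~\ref{fig:interference}.
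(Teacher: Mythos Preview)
Your proof is correct and follows exactly the paper's approach: show that $E_1$, $E_2$, and $\langle\psi_1,\psi_2\rangle$ are each invariant under the simultaneous rotation, then invoke \eqref{eq:S-compact}. You add a bit more detail (the degenerate branch and the conceptual remark about relative versus global phase), but the argument is the same.
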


\begin{proof}
$E_k$ are invariant under a global phase.
Moreover,
\(
\langle e^{i\delta}\psi_1, e^{i\delta}\psi_2\rangle
= \sum e^{i\delta}\psi_1 \overline{e^{i\delta}\psi_2}
= \sum e^{i\delta}\psi_1 e^{-i\delta}\overline{\psi_2}
= \langle\psi_1,\psi_2\rangle.
\)
Apply \eqref{eq:S-compact}.
\end{proof}

\begin{proposition}[Anti-phase minimum]
If $\psi_2=-\psi_1$ then $S(\psi_1,\psi_2)=0$.
\end{proposition}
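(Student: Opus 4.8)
The plan is to exploit the equivalent algebraic form \eqref{eq:S-compact}, which already isolates the single inner-product term that governs the numerator. First I would compute the two energies under the hypothesis $\psi_2=-\psi_1$: since negation leaves magnitudes unchanged, $E_2=\|{-\psi_1}\|_2^2=\|\psi_1\|_2^2=E_1$. Next I would evaluate the Hermitian inner product using its conjugate-linearity in the second argument, obtaining $\langle\psi_1,-\psi_1\rangle=-\langle\psi_1,\psi_1\rangle=-E_1$, so that $\Re\langle\psi_1,\psi_2\rangle=-E_1$.

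Substituting these into the bracketed factor of \eqref{eq:S-compact} gives $E_1+E_2+2\,\Re\langle\psi_1,\psi_2\rangle=E_1+E_1-2E_1=0$, which forces $S=0$ regardless of the remaining finite, nonnegative factors, provided $E_1+E_2>0$. An even more direct route bypasses the inner product entirely: under $\psi_2=-\psi_1$ the pointwise sum $\psi_1(x)+\psi_2(x)$ vanishes for every $x$, so the interference energy $\sum_x|\psi_1(x)+\psi_2(x)|^2$ is identically zero, and the resonance fraction in \eqref{eq:resonance} is zero whenever its denominator is positive.

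The only case needing separate attention is the degenerate one where $\psi_1=0$ (and hence $\psi_2=0$), for which $E_1+E_2=0$; here the defining convention sets $S=0$ directly, so the claim holds there as well. I do not anticipate a genuine obstacle, since this proposition is a one-line specialization of the already-established compact form, and checking that the anti-phase term exactly cancels the combined energy is routine. The one remark worth making explicit is that the hypothesis $\psi_2=-\psi_1$ automatically enforces $E_1=E_2$, so this zero coincides with the global minimum $S_{\min}=0$ identified in the Bounds theorem: equal-energy, perfectly anti-phased patterns realize the lower end of the range $[0,1]$.
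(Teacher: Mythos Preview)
Your argument is correct and follows essentially the same route as the paper: use \eqref{eq:S-compact}, note $E_1=E_2$ and $\Re\langle\psi_1,-\psi_1\rangle=-E_1$, so the numerator vanishes. Your explicit treatment of the degenerate case $\psi_1=0$ via the convention $S=0$ and the alternative observation that $\psi_1+\psi_2\equiv 0$ kills the interference term directly are welcome additions, but they do not change the approach.
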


\begin{proof}
Then $E_1=E_2$ and $\Re\langle\psi_1,-\psi_1\rangle=-E_1$,
so the numerator in \eqref{eq:S-compact} vanishes.
\end{proof}

\begin{theorem}[Reduction to cosine in the real, equal-norm case]
Assume $\psi_1,\psi_2\in\mathbb{R}^L$ and $E_1=E_2>0$.
Let $\theta$ be the angle between them.
Then $S(\psi_1,\psi_2)=\frac{1+\cos\theta}{2}$.
\end{theorem}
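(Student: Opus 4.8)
The plan is to reduce everything to the compact algebraic form \eqref{eq:S-compact} already established, and then apply the two hypotheses---real-valuedness and equal norms---as simplifying substitutions rather than re-expanding $\sum_x|\psi_1+\psi_2|^2$ from scratch. First I would invoke \eqref{eq:S-compact}, which expresses $S$ entirely in terms of $E_1$, $E_2$, and $\Re\langle\psi_1,\psi_2\rangle$. This is the right starting point because it already packages the interference numerator and the scale factor $R$ together, so the only remaining work is evaluating the inner product and collapsing the prefactor.

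Next I would specialize the inner product to the real case. Since $\psi_1,\psi_2\in\mathbb{R}^L$, each $\overline{\psi_2(x)}=\psi_2(x)$, so the Hermitian inner product collapses to the ordinary Euclidean dot product: $\langle\psi_1,\psi_2\rangle=\sum_x\psi_1(x)\psi_2(x)=\psi_1\cdot\psi_2$, which is already real, hence $\Re\langle\psi_1,\psi_2\rangle=\psi_1\cdot\psi_2$. Invoking the standard definition of the angle $\theta$ between real vectors, namely $\cos\theta=(\psi_1\cdot\psi_2)/(\|\psi_1\|\,\|\psi_2\|)$, I would write $\Re\langle\psi_1,\psi_2\rangle=\sqrt{E_1E_2}\,\cos\theta$.

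Finally I would impose $E_1=E_2=:E>0$. This makes the scale-alignment factor $R=\tfrac{2\sqrt{E_1E_2}}{E_1+E_2}$ from \eqref{eq:R} equal to $1$, and gives $\sqrt{E_1E_2}=E$, $(E_1+E_2)^2=4E^2$, and $\Re\langle\psi_1,\psi_2\rangle=E\cos\theta$. Substituting these into \eqref{eq:S-compact} yields $S=\dfrac{(2E+2E\cos\theta)\,E}{4E^2}=\dfrac{1+\cos\theta}{2}$, which is the claimed identity.

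I do not expect a genuine obstacle here: once \eqref{eq:S-compact} is in hand the computation is essentially mechanical. The only point requiring a moment of care is the identification of the Hermitian inner product with the real dot product, which relies on the vectors being genuinely real---so that conjugation is trivial and the imaginary part vanishes---together with the convention that $\theta$ is the usual angle defined through $\cos\theta$. It is worth noting explicitly that the equal-norm hypothesis is exactly what forces $R=1$, so that the entire expression collapses cleanly to the familiar half-angle form; without it the monotonicity remark made in the main text, rather than this closed form, would be the relevant statement.
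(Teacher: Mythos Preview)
Your proof is correct and follows essentially the same route as the paper: invoke the compact form \eqref{eq:S-compact}, use real-valuedness to identify $\Re\langle\psi_1,\psi_2\rangle$ with $\sqrt{E_1E_2}\cos\theta$, and then substitute $E_1=E_2$ to collapse the expression to $\tfrac{1+\cos\theta}{2}$. Your additional remarks (that $R=1$ under equal norms, and that conjugation is trivial for real entries) merely make explicit what the paper leaves implicit.
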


\begin{proof}
For real vectors, $\Re\langle\psi_1,\psi_2\rangle = \|\psi_1\|_2\|\psi_2\|_2\cos\theta$.
With $E_1=E_2$, one has $\sqrt{E_1E_2}=E_1$ and $E_1+E_2=2E_1$,
hence \eqref{eq:S-compact} gives
\(
S = \frac{(2E_1+2E_1\cos\theta)E_1}{(2E_1)^2} = \frac{1+\cos\theta}{2}.
\)
\end{proof}

\begin{proposition}[When does $S=1$?]
$S(\psi_1,\psi_2)=1$ iff $\psi_1=\psi_2\neq 0$.
\end{proposition}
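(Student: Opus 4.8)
The plan is to prove the nontrivial forward direction and dispatch the converse immediately. For the converse, if $\psi_1=\psi_2\neq 0$ then $S=1$ is exactly the self-match proposition already established, so nothing further is needed. For the forward direction I would start from the compact form \eqref{eq:S-compact} and reuse the inequality chain from the Bounds theorem, reading off precisely the equality conditions that force $S=1$.

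First I would dispose of degenerate energies: if $E_1=0$ or $E_2=0$ then $\sqrt{E_1E_2}=0$ and \eqref{eq:S-compact} gives $S=0\neq 1$, so I may assume $E_1,E_2>0$. Writing $a=\sqrt{E_1}$, $b=\sqrt{E_2}$, the Bounds proof exhibits the chain
\[
S=\frac{\sqrt{E_1E_2}}{(E_1+E_2)^2}\big(E_1+E_2+2\,\Re\langle\psi_1,\psi_2\rangle\big)\;\le\;\frac{ab(a+b)^2}{(a^2+b^2)^2}\;\le\;1,
\]
where the first inequality uses $\Re\langle\psi_1,\psi_2\rangle\le\sqrt{E_1E_2}$ and the second rests on $(a^2+b^2)^2-ab(a+b)^2=(a-b)^2(a^2+ab+b^2)$. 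Hence $S=1$ forces both inequalities to be equalities: the second gives $(a-b)^2(a^2+ab+b^2)=0$, and since $a,b>0$ we conclude $a=b$, i.e. $E_1=E_2$; the first gives $\Re\langle\psi_1,\psi_2\rangle=\sqrt{E_1E_2}=E_1$.

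The crux is then to upgrade the scalar equality $\Re\langle\psi_1,\psi_2\rangle=\sqrt{E_1E_2}$ into pointwise equality of the two waveforms, which I expect to be the main obstacle, since it requires the equality case of Cauchy–Schwarz in the complex Hermitian setting together with the sharpening condition $\Re z=|z|$. Chaining $\Re\langle\psi_1,\psi_2\rangle\le|\langle\psi_1,\psi_2\rangle|\le\sqrt{E_1E_2}$, equality throughout forces (i) $\langle\psi_1,\psi_2\rangle$ to be real and non-negative, and (ii) linear dependence $\psi_2=\lambda\psi_1$ for some $\lambda\in\mathbb{C}$ (legitimate since $\psi_1\neq 0$). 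I would then substitute $E_2=|\lambda|^2E_1$ and $\langle\psi_1,\psi_2\rangle=\overline{\lambda}\,E_1$, so condition (i) yields $\lambda\ge 0$ while $E_1=E_2$ yields $|\lambda|=1$; together these give $\lambda=1$. Therefore $\psi_2=\psi_1$ with $E_1>0$, i.e. $\psi_1=\psi_2\neq 0$, completing the equivalence.
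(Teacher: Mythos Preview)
Your proof is correct. The paper actually states this proposition \emph{without} proof, so there is nothing to compare against directly; your argument supplies exactly the missing details and is the natural one given the surrounding material: you read off the equality conditions in the two-step chain from the Bounds theorem, obtaining $E_1=E_2$ from the algebraic identity $(a^2+b^2)^2-ab(a+b)^2=(a-b)^2(a^2+ab+b^2)$ and $\Re\langle\psi_1,\psi_2\rangle=\sqrt{E_1E_2}$ from the Cauchy--Schwarz step, and then correctly combine the equality case of Cauchy--Schwarz (linear dependence $\psi_2=\lambda\psi_1$) with the constraint $\Re(\overline{\lambda}E_1)=|\lambda|E_1$ and $|\lambda|=1$ to force $\lambda=1$. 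The handling of degenerate energies and the appeal to the self-match proposition for the converse are also in order.
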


\paragraph{Remark (Not a metric).}
$S$ is a bounded similarity, not a distance; $1-S$ does not in general satisfy the triangle inequality.

\bibliographystyle{unsrt}
\bibliography{references}
\end{document}